\newtheorem{theorem}{Theorem}[section]
\newtheorem{lemma}[theorem]{Lemma}
\newtheorem{corollary}[theorem]{Corollary}
\newtheorem{proposition}[theorem]{Proposition}
\newtheorem{definition}[theorem]{Definition}
\newtheorem{remark}[theorem]{Remark}
\newcommand{\RR}{\mathbb{R}}
\newcommand{\R}{\mathbb{R}}
\newcommand{\ZZ}{\mathbb{Z}}
\newcommand{\CC}{\mathbb{C}}
\newcommand{\C}{\mathbb{C}}
\newcommand{\TT}{\mathbb{T}}
\newcommand{\T}{\mathbb{T}}
\newcommand{\B}{\mathbf{B}}
\newcommand{\W}{\mathbf{W}}
\newcommand{\G}{\Gamma}
\newcommand{\g}{\gamma}
\newcommand{\fh}{\hat{f}}
\newcommand{\D}{\mathcal{D}}
\newcommand{\dn}{d\,'}
\newcommand{\bd}{\begin{definition}}
\newcommand{\ed}{\end{definition}}
\newcommand{\bt}{\begin{theorem}}
\newcommand{\et}{\end{theorem}}
\newcommand{\bl}{\begin{lemma}}
\newcommand{\el}{\end{lemma}}
\newcommand{\bc}{\begin{corollary}}
\newcommand{\ec}{\end{corollary}}
\newcommand{\br}{\begin{remark}}
\newcommand{\er}{\end{remark}}
\newcommand{\bp}{\begin{proposition}}
\newcommand{\ep}{\end{proposition}}
\begin{document}
\title[Wannier functions frames]{On Parseval frames of exponentially decaying composite Wannier functions}
\author{David Auckly}
\address{Mathematics Department,
Kansas State University,
Manhattan, KS, USA}
\email{dav@math.ksu.edu}
\author{Peter Kuchment}
\address{Mathematics Department,
Texas A\&M University,
College Station, TX 77843-3368, USA}
\email{kuchment@math.tamu.edu}
\thanks{The second author was supported in part by NSF grants DMS-0406022 and DMS-1517938. }

\subjclass{Primary 35Q40, 35P10, 47F05, 81V55; Secondary 65N25}
\date{}

\keywords{Spectral theory, periodic operator, condensed matter, Bloch bundle, Wannier function}

\begin{abstract}

Let $L$ be a periodic self-adjoint linear elliptic operator in $\R^n$  with coefficients periodic with respect to a lattice $\G$, e.g. Schr\"{o}dinger operator $(i^{-1}\partial/\partial_x-A(x))^2+V(x)$ with periodic magnetic and electric potentials $A,V$, or a Maxwell operator $\nabla\times\varepsilon (x)^{-1}\nabla\times$ in a periodic medium. Let also $S$ be a finite part of its spectrum separated by gaps from the rest of the spectrum. We address here the question of existence of a finite set of exponentially decaying Wannier functions $w_j(x)$ such that their $\G$-shifts $w_{j,\g}(x)=w_j(x-\g)$ for $\g\in\G$ span the whole spectral subspace corresponding to $S$. It was shown by D.~Thouless in 1984 that a topological obstruction sometimes exists to finding exponentially decaying $w_{j,\g}$ that form an orthonormal (or any) basis of the spectral subspace. This obstruction has the form of non-triviality of certain finite dimensional (with the dimension equal to the number of spectral bands in $S$) analytic vector bundle (Bloch bundle), which we denote $\Lambda_S$. It was shown by G.~Nenciu in 1983 that in the presence of time reversal symmetry (which implies absence of magnetic fields), and if $S$ is a single band, the bundle is trivial and thus the desired Wannier functions do exist. In 2007, G.~Panati proved that in dimensions $n\leq 3$, even if $S$ consists of several spectral bands, the time reversal symmetry removes the obstruction as well. If the bundle is non-trivial, it was shown in 2009 by one of the authors that it is always possible to find a finite number $l$ (estimated there as $m\leq l \leq 2^n m$) of exponentially decaying Wannier functions $w_j$ such that their $\G$-shifts form a tight (Parseval) frame in the spectral subspace. A Parseval frame is the next best thing after an orthonormal basis (unavailable in the presence of the topological obstacle). This appears to be the best one can do when the topological obstruction is present. Here we significantly improve the estimate on the number of extra Wannier functions needed, showing that in physical dimensions the number $l$ can be chosen equal to $m+1$, i.e. only one extra family of Wannier functions is required. This is the lowest number possible in the presence of the topological obstacle. The result for dimension four is also stated (without a proof), in which case $m+2$ functions are needed.

One should mention that a powerful numerical machinery for creating bases and frames of decaying Wannier functions has been developed in \cite{Marzari,Marzari12,MarzSou,Brouder}, and studied analytically in \cite{Panati,Panati13,Monaco_etal,Fior,Busch_wannier3}. Recent progress in direction of actual construction of Wannier bases and frames has also been achieved \cite{Cances,Cornean,CornMonaco,Fior}. We do not consider his issue, except showing in Section \ref{S:generic} that a ``generic'' choice of the missing $(m+1)$st function should work.

The main result of the paper was announced without proof in \cite[Section 6.5]{KuchBAMS}.
\end{abstract}

\maketitle

\section{Introduction}
Wannier functions, along with Bloch waves, play an important role in various areas of physics and material science, condensed matter theory, photonic crystal theory (see, e.g., \cite{JJWM,Kuch_photchapter} for general discussion of photonic crystals). They give a very useful tool for description of electronic properties of solids, theory of polarization, photonic crystals, numerical analysis using tight-binding approximation, etc. (see, e.g. \cite{Wann,Whit,Bush_wannier,Busch_wannier2,KuchBAMS,Kuch_wan,Busch_wannier3,AM,Nenciu,Brouder,Panati,Marzari,Kohn,KohnLutt,MarzSou,Wann,Wann_web} and references therein for Wannier functions and their applications). As it is formulated in \cite{MarzSou}, strongly localized Wannier functions ``are the solid-state equivalent of ``localized molecular orbitals''..., and thus provide an insightful picture of the nature of chemical bonding.'' It is crucial to have the Wannier functions decaying as fast as possible. Thus the problem of choosing a finite number of exponentially decaying Wannier functions whose lattice shifts form an orthonormal basis in the spectral subspace corresponding an isolated part $S$ of the spectrum has been intensively considered in physics literature since the paper by W.~Kohn \cite{Kohn}, who showed that this was possible in $1D$. The problem becomes non-trivial in $2D$ and higher dimensions. Indeed, a topological obstruction,  not present in $1D$, might arise, as shown by D.~Thouless \cite{Thouless}. Existence of such a basis is known to be equivalent to triviality of certain analytic vector fiber bundle (\textbf{Bloch bundle}), which we will denote $\Lambda_S$ (see (\ref{E:bundle})). G.~Nenciu \cite{Nenciu} showed in 1983 (see also \cite{Helffer}) presence of time reversal symmetry, and if $S$ is a single spectral band, the bundle is trivial and thus the obstacle disappears. G.~Panati \cite{Panati} proved in 2007, that in dimensions $n\leq 3$, even if $S$ consists of several spectral bands, the time reversal symmetry still removes the obstruction. In this case one resorts to the so called \textbf{composite}, or \textbf{generalized Wannier functions} that correspond to a finite family of bands, rather than to a single band (see Section \ref{S:prelim} below). The activity in this direction is still high and even increasing \cite{MarzSou,Wann_web}.

What can one do if the topological obstacle is present and thus a family of Wannier functions with the described basis properties cannot exist? One can try to relax the exponential decay condition, but there is not much one can do without stumbling upon the topological obstruction (e.g., a slow decay such as summability of $L^2$ norms over the shifted copies of the Wigner-Seitz cell is already impossible \cite{Kuch_wan}).

In the positive direction, the following result was proven in \cite{Kuch_wan}:

{\bf Theorem } {\em Let $L$ be a self-adjoint elliptic $\G$-periodic operator in $\RR^n, n\geq 1$ and $S\subset\RR$ be the union of $m$ spectral bands of $L$. Suppose that $S$ is separated from the rest of the spectrum by gaps. Then there exists a finite number $l$ of exponentially decaying composite Wannier functions
$w_j(x)$ such that their shifts $w_{j,\g}:=w_j(x-\g),\g\in\G$ form a tight (Parseval) frame in the spectral subspace $H_S$ of the operator $L$. This means that for any $f(x)\in H_S$, the equality holds
\begin{equation}
    \int\limits_{\RR^n}|f(x)|^2dx=\sum\limits_{j,\g}|\int\limits_{\RR^n}f(x)\overline{w_{j,\g}(x)}dx|^2.
\end{equation}
Here the number $l\in [m,2^n m]$ is equal to the smallest dimension of a trivial bundle containing an equivalent copy of $\Lambda_S$. In particular, $l=m$ if and only if $\Lambda_S$ is trivial, in which case an orthonormal basis of exponentially decaying composite Wannier functions exists.
}

Here a \textbf{frame} means an \underline{overdetermined} (rather than basis) system of functions, and thus the orthonormal property is not achievable. The tight (Parseval) property is the best analog of orthonormality one can get in this case (see, e.g., \cite{Larson_frame}). For instance, it allows the control of the $L^2$ norms in terms of the projections onto the Wannier system. Moreover, the overdeterminancy makes the computation stabler.

The significant deficiencies of the above result of \cite{Kuch_wan} are that, first, the upped bound $l\leq 2^nm$ is ridiculously high; second, even for a manageable value of $l>m$ it is not clear how to practically create the overdetermined system of $l$ Wannier functions, since the proof in \cite{Kuch_wan} is not constructive.
Even in the case of a trivial Bloch bundle, it is not that clear how one can actually find the needed holomorphic basis of Bloch functions, and this is still under investigation \cite{Cornean,Fior,CornMonaco,Read}.

We address the first of these issues in this text. In Section \ref{S:prelim} we survey the main notions and results of the theory that will be needed for the rest of the text. In order to make the text independent, this section is rather long. An expert can skip through it or go immediately to Section \ref{S:3D}, where it is shown that in the ``physical dimension'' $n=3$ only one extra Wannier functions is needed, i.e. $l=m+1$ (announced in \cite[Section 6.5]{KuchBAMS}).  Section \ref{S:d4} contains (without a proof) an analog of this result, when $n=4$. Here $m+2$ Wannier functions might be required. We discuss how one can try to find the ``missing'' $(m+1)$st function in Section \ref{S:generic}. In Section \ref{S:remarks} final remarks and conclusions are provided. The texts ends with the Acknowledgments section.

\section{Main notions and auxiliary results}\label{S:prelim}

Let $L(x,D)$ be a bounded from below self-adjoint elliptic operator in $\RR^n$. The specific nature of the operator will be irrelevant (e.g., matrix operators, such as Dirac or Maxwell can be allowed). One can think, without loss of generality, of one's favorite periodic operator, e.g. the Schr\"{o}dinger operator $(\frac{1}{i}\frac{\partial}{\partial x}-A(x))^2+V(x)$ with real periodic magnetic and electric potentials $A,V$. Some conditions need to be imposed on the potentials to define a self-adjoint operator $L$ in $L^2(\RR^n)$ (e.g., \cite{Cycon,ReedSimon}). However, for what follows most of these details are not germane to the main issue, so we assume ``sufficiently nice'' (e.g., smooth) coefficients of $L$.

Let $\G$ be a (\textbf{Bravais} \cite{AM}) \textbf{lattice} in $\RR^n$, i.e. the set of integer linear combinations of vectors of a basis $a_1,\dots,a_n$ in $\RR^n$ \footnote{No generality will be lost if the reader assumes that $\G$ is the integer lattice $\ZZ^n$.}. The coefficients of $L$ are assumed to be periodic with respect to the shifts by vectors $\g\in\G$. We fix a \textbf{fundamental domain} $\W$ of $\G$, i.e. such that its $\G$-shifts cover the whole $\RR^n$ with only boundary overlap.

Let $\G^*$ the \textbf{reciprocal} (or \textbf{dual}) lattice \cite{AM} to $\G$. It lives in the dual space $(\RR^n)^*$, but if an inner product $(\cdot,\cdot)$ in $\RR^n$ is fixed, $\G^*$ can be realized in $\RR^n$ as the set of all vectors $\kappa$ such that $(\kappa,\g)\in 2\pi\ZZ$ for all $\g\in\G$ (if $\Gamma=\ZZ^n$, then $\Gamma^*=2\pi\ZZ^n$). We fix a fundamental domain $\B$ for $\G$ (e.g., the \textbf{first Brillouin zone} \cite{AM}). Then $\T:=\RR^n/\G$ ($\T^*:=\RR^n/\G^*$) is a torus and $\G$-($\G^*$-) periodic functions on $\RR^n$ are naturally identified with functions on $\T$ ($\T^*$).

The periodicity of the spectral problem $Lu=\lambda u$ with respect to $\G$ suggests to use the well known \textbf{Bloch-Floquet transform} (the name varies from a source to a source) \cite{Kuch_book,Kuch_photchapter,ReedSimon,Kuch_UMN,AM} :
\begin{equation}\label{E:Floquet_tr}
  f(x)\mapsto\hat{f}(k,x):=
  \sum\limits_{\g \in \G} f(x+\g) e^{-ik \cdot \g}.
\end{equation}
Here $k$ is a real (or complex) $n$-dimensional vector, which is called \textbf{quasi-momentum} (or \textbf{crystal momentum} and \textbf{Bloch momentum}). Assuming that $f$ decays sufficiently fast, there is no convergence problem. It is easy to check that for any (even complex) quasi-momentum $k$, the function $\hat{f}(k,x)$ is $\G^*$-periodic with respect to $k$ and is of the \textbf{Bloch} (also called {\bf Floquet}) \textbf{form} with respect to $x$, i.e.
\begin{equation}\label{E:Bloch}
\hat{f}(k,x)=e^{ik\cdot x}v_k(x),
\end{equation}
where $v_k(x)$ is $\G$-periodic. The values $x\in\W$ and $k\in \B$ are sufficient for determining the whole function $\hat{f}(k,x)$. One can consider $\hat{f}(k,x)$ as a function $\hat{f}(k,\cdot)$ on $\B$ (or better, on the torus $\T^*$) with values in a space of functions on $\W$. Considering the torus $\T^*$, it is more natural to consider $\hat{f}(k,\cdot)$ as a function of the {\bf Floquet multiplier} $z:=e^{ik}:=(e^{ik\cdot a_1},\dots,e^{ik\cdot a_n})$, rather than of the quasi-momentum $k$. Then the torus $\T^*$ becomes the unit torus $\{z| |z_j|=1,j=1,\dots,n\}$ in $\C^n$. We will also need some complex neighborhoods of the space of real quasi-momenta and of the torus $\TT^*$, defined for a given $\alpha >0$:
\begin{equation}\label{E:domain_D}
\D_\alpha = \{ k \in \CC^n|\, |\mbox{Im } k \cdot a_j| < \alpha, j=1, ...,
n\},
\end{equation}
and its image under the transform $k\mapsto z$
\begin{equation}\label{E:omega-a}
\Omega_\alpha=\{z=\left( z_1,...,z_n\right) \in
    \Omega \, | \, e^{-\alpha}<|z_j|<e^\alpha,\, j=1,...,n\}.
\end{equation}
Here $\{a_j\}$ is the basis of $\G$ mentioned before.

\begin{definition}\label{D:exp_space}\indent
\begin{itemize}
\item If $\Omega$ is an $n$-dimensional complex domain and $H$ is a Hilbert space, We will denote by $A(\Omega, H)$ the space of all $H$-valued analytic functions on $\Omega$, equipped with the topology of uniform convergence on compacta.
\item The space $L^2_\alpha(\RR^n)$ consists of all functions $f \in
L^2_{loc}(\RR^n)$ such that for any $0<b<\alpha$ the following
expression is finite:
\begin{equation}\label{E:L2_a}
  \mathop{sup}\limits_{\g\in\G}\, \|f \|_{L^2(\mathbf{W}+\g)}e^{b|\g|}<\infty.
\end{equation}
This space is equipped with the natural topology defined by
the semi-norms $\psi_b$.
\end{itemize}
\end{definition}

We now quote some standard results about the Bloch-Floquet transform (analogs of the standard Plancherel, Paley-Wiener, and inversion theorems for the Fourier series), see e.g., \cite{Kuch_book,KuchBAMS}:

\begin{theorem}\label{T:Planch}\indent
\begin{enumerate}
\item If $f \in L^2(\RR^n)$ and $K \subset \RR^n$ is a compact, then the
series (\ref{E:Floquet_tr}) converges in the space
$L^2(\TT^*,L^2(K))$. Moreover, the following equality (Plancherel
theorem) holds:
\begin{equation}\label{E:Planch}
  \|f\|_{_{L^2(\RR^n)}}^2=\int\limits_{\B}
  \|\fh (k,\cdot)\|_{_{L^2(\W)}}^2\dn k=\int\limits_{\TT^*}
  \|\fh (z,\cdot)\|_{_{L^2(\W)}}^2\dn z,
\end{equation}
where $\dn k$ is the normalized to total measure $1$ Lebesque measure on $\B$, and $\dn z$ is the normalized
Haar measure on $\TT^*$.
\item For any $\alpha \in (0,\infty]$, Bloch-Floquet transform
$$
f \mapsto \fh
$$
is a topological isomorphism of the space $L^2_\alpha(\RR^n)$ onto
$A(\Omega_\alpha,L^2(\mathbf{W}))$.
\item
For any $f \in L^2(\RR^n)$ the following inversion
formula holds:
\begin{equation}\label{E:Gelf_inversion}
f(x)=\int\limits_{\TT^*} \fh (k,x) \dn k, \,\, x \in \RR^n.
\end{equation}
\end{enumerate}
\end{theorem}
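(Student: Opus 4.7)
My plan is to view the Bloch--Floquet transform (\ref{E:Floquet_tr}) as the Fourier series, fiberwise in $x\in\W$, of the $\G$-indexed sequence $(f(x+\g))_{\g\in\G}$ on the dual torus $\TT^*$. All three statements then reduce to standard Fourier series facts on $\TT^*$ (the first with Hilbert-space valued coefficients, the second with the classical Paley--Wiener philosophy for Laurent series on a polyannulus).

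For part (1), I would first establish convergence by density: the map is manifestly defined and finite on the dense subspace of compactly supported functions, so I only need an a priori $L^2$ bound. Fix $x\in\W$; then Parseval's identity for Fourier series on $\TT^*$ applied to the sequence $(f(x+\g))_{\g\in\G}\in\ell^2(\G)$ gives
\begin{equation*}
\int_{\TT^*} |\fh(k,x)|^2\,\dn k \;=\; \sum_{\g\in\G} |f(x+\g)|^2.
\end{equation*}
Integrating this identity in $x$ over $\W$, using Tonelli to swap the sum and the integral on the right, and recognizing $\sum_\g \int_\W |f(x+\g)|^2\,dx = \|f\|_{L^2(\RR^n)}^2$, yields (\ref{E:Planch}). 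The same bound gives convergence of (\ref{E:Floquet_tr}) in $L^2(\TT^*,L^2(\W))$, and convergence in $L^2(\TT^*,L^2(K))$ for general compact $K$ follows by decomposing $K$ into finitely many $\G$-translates of pieces of $\W$.

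For part (2), the forward direction is a direct convergence argument: for $f\in L^2_\alpha$ and any $b<\alpha$, the bound $\|f(\cdot+\g)\|_{L^2(\W)}\le C_b e^{-b|\g|}$ makes the series $\sum_\g f(\cdot+\g)e^{-ik\cdot\g}$ converge absolutely and uniformly in $L^2(\W)$ on the smaller domain $\overline{\D_{b'}}$ for any $b'<b$, hence defines an element of $A(\D_\alpha,L^2(\W))\cong A(\Omega_\alpha,L^2(\W))$; the estimate of each compact-open seminorm by some $\psi_b$ is immediate. For the inverse direction, given $\fh\in A(\Omega_\alpha,L^2(\W))$ I would define
\begin{equation*}
f_\g(x) \;:=\; \int_{\TT^*} \fh(z,x)\, z^{\g}\,\dn z, \qquad x\in\W,
\end{equation*}
and use Cauchy's theorem in several variables to deform the integration from the unit polytorus to a polytorus $\{|z_j|=r_j\}\subset\Omega_\alpha$, choosing $r_j=e^{\pm b}$ adapted to the sign of each component of $\g$. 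The resulting estimate $\|f_\g\|_{L^2(\W)}\le C_b e^{-b|\g|}$ for all $b<\alpha$ produces a function in $L^2_\alpha$ by gluing the translates $f(x+\g):=f_\g(x)$, and one checks on the dense subspace of Laurent polynomials in $z$ that the two maps are mutually inverse. Continuity of the inverse then follows by the closed graph theorem (or directly from the Cauchy estimate). I expect this part to be the main obstacle, since one must keep the exponential weight $e^{b|\g|}$ of the $L^2_\alpha$-seminorm synchronized with the compact exhaustion of $\Omega_\alpha$ used for the compact-open topology on the analytic side.

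For part (3), I would observe that (\ref{E:Floquet_tr}) exhibits $\fh(k,x)$, for fixed $x\in\W$, as a Fourier series in $k$ whose $\g$-th Fourier coefficient is precisely $f(x+\g)$. Extracting the $\g=0$ coefficient gives (\ref{E:Gelf_inversion}) on $\W$. To extend to arbitrary $x\in\RR^n$, write $x=y+\g_0$ with $y\in\W$ and use the identity
\begin{equation*}
\fh(k,y+\g_0) \;=\; \sum_{\g} f(y+\g_0+\g)e^{-ik\cdot\g} \;=\; e^{ik\cdot\g_0}\fh(k,y),
\end{equation*}
which comes straight from the defining series. Integrating over $\TT^*$ then reads off the $\g_0$-th Fourier coefficient of $\fh(k,y)$, which is $f(y+\g_0)=f(x)$, completing (\ref{E:Gelf_inversion}). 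For general $f\in L^2(\RR^n)$ rather than $f\in L^2_\alpha$, convergence of both sides in $L^2_{\text{loc}}$ is supplied by part (1).
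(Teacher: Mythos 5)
Your argument is correct and is essentially the standard one: the paper itself quotes this theorem without proof, citing \cite{Kuch_book,KuchBAMS}, and the proof given there proceeds exactly as you do, identifying the Bloch--Floquet transform fiberwise with a Fourier (Laurent) series on $\TT^*$ with $L^2(\W)$-valued coefficients, so that (1) and (3) become the vector-valued Plancherel and coefficient-extraction identities and (2) becomes the Paley--Wiener contour-shift estimate on polyannuli. The only technical point, which you already flag, is synchronizing the Euclidean weight $e^{b|\g|}$ in (\ref{E:L2_a}) with the coordinate description of $\Omega_\alpha$ via the equivalence of norms on the lattice coordinates of $\g=\sum m_j a_j$.
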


The first statement claims that the Bloch-Floquet transform is an isometry between the natural Hilbert spaces, the second shows that exponential decay transforms into analyticity in a neighborhood of the torus $\TT^*$ (a Paley-Wiener type theorem), and the third one provides an inversion of the transform.

\begin{remark}\label{R:aniso}
In the definition (\ref{E:omega-a}) of the domain $\Omega_\alpha$ one can allow a vector $\alpha=(\alpha_1,\dots,\alpha_n)$, modifying accordingly  (\ref{E:omega-a}) to

\begin{equation}\label{E:omega-avect}
\Omega_\alpha=\{z=\left( z_1,...,z_n\right) \in
    \Omega \, | \, e^{-\alpha_j}<|z_j|<e^\alpha_j,\, j=1,...,n\}.
\end{equation}
Then the second statement of Theorem \ref{T:Planch} still holds, if one modifies accordingly the definition (\ref{E:L2_a}) of the space $L^2_\alpha$, requiring anisotropic decay.
\end{remark}

The Bloch-Floquet transform block diagonalizes the periodic operator $L$, from which one obtains the well known \cite{ReedSimon,AM,Kuch_book,Kuch_UMN,KuchBAMS} \textbf{band-gap spectral structure} of the operator $L$. Namely,

\begin{theorem}\cite{Kuch_book,Wilcox}\label{T:bandgap}

Let $L(k)$ be the operator $L$ acting on the Bloch functions (\ref{E:Bloch}) with a fixed quasi-momentum $k$. If we label (for real $k$) the eigenvalues of $L$ in nondecreasing order as
\begin{equation}\label{E:fl_spectrum}
    \lambda_1(k)\leq \lambda_2(k)\leq \dots \mapsto \infty,
\end{equation}
then
\begin{enumerate}
  \item \textbf{band functions} $\lambda_j(k)$ are continuous, $\G^*$-periodic, and piece-wise analytic in $k$;
  \item if $I_j$ is the range of $\lambda_j(k)$ (the \textbf{$j$th band}), then the spectrum of $L$ is
  $$
  \sigma(L)=\bigcup\limits_j I_j.
  $$
  \item The corresponding \textbf{Bloch eigenfunctions} $\phi_j(k,x)$ (i.e., $L\phi_j=\lambda_j\phi_j$) can be chosen $\phi_j(k,\cdot)$ as a piece-wise analytic $L^2$-functions on $\B$ with values in $L^2(\W)$, whose norm in $L^2(\W)$ is almost everywhere constant and can be chosen equal to $1$. One can also assume that $\phi(k,\cdot)$ is $\G^*$-periodic with respect to $k$.
\end{enumerate}
\end{theorem}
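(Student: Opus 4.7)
The plan is to use the Bloch-Floquet transform from Theorem~\ref{T:Planch} to reduce the spectral problem for $L$ on $\RR^n$ to a family of problems on the compact torus $\T$, and then apply analytic perturbation theory in the parameter $k$.

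First, I would realize $L$ as a direct integral. For each $k\in\CC^n$, set $L(k):=e^{-ik\cdot x}\, L\, e^{ik\cdot x}$, viewed as an operator on $\G$-periodic functions on $\T$ in a suitable Sobolev space; a Bloch eigenfunction $\phi(k,x)=e^{ik\cdot x}v(k,x)$ of $L$ corresponds to the eigenproblem $L(k)v=\lambda v$ on $\T$. Since the Bloch-Floquet transform is an isometric isomorphism $L^2(\RR^n)\to L^2(\TT^*, L^2(\W))$ that intertwines $L$ with fiberwise multiplication by $L(k)$, the spectrum of $L$ is determined by the fiberwise spectra. Ellipticity of $L$ together with compactness of $\T$ imply that, for real $k$, each $L(k)$ is self-adjoint with compact resolvent, hence its spectrum is a discrete sequence $\lambda_1(k)\leq\lambda_2(k)\leq\cdots\to\infty$ of finite multiplicity. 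Once continuity is established, the representation $\sigma(L)=\bigcup_j I_j$ from part~(2) follows from the direct integral decomposition.

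Second, I would establish regularity in $k$. The coefficients of $L(k)$ depend \emph{polynomially} on $k$; e.g., for the \Sc case $L(k)=(\frac{1}{i}\partial-A+k)^2+V$. Hence $\{L(k)\}_{k\in\CC^n}$ is a holomorphic family of type~(A) in Kato's sense with the $k$-independent common core of smooth periodic functions. Self-adjointness on the real locus together with the compact-resolvent property place us in the setting of Rellich's theorem and Kato's perturbation theory for analytic families. Consequently, Riesz projectors $P_j(k)=\frac{1}{2\pi i}\oint(\zeta-L(k))^{-1}\,d\zeta$ around an isolated spectral cluster depend holomorphically on $k$ in a complex neighborhood of a real base point, and Kato's construction produces an analytic unitary that diagonalizes $L(k)$ on each cluster's spectral subspace, yielding analytic eigenvalues and analytic unit-norm eigenvectors locally. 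After reordering into the monotone labeling one obtains piecewise analytic band functions $\lambda_j(k)$ together with correspondingly piecewise analytic eigenfunctions $\phi_j(k,\cdot)$; continuity in the monotone ordering is standard from the min-max principle. The $\G^*$-periodicity of the spectrum follows from the identity $L(k+\kappa)=e^{-i\kappa\cdot x} L(k)\, e^{i\kappa\cdot x}$ for $\kappa\in\G^*$: because $e^{i\kappa\cdot x}$ is $\G$-periodic, it is a unitary on $L^2(\T)$, so $L(k+\kappa)$ and $L(k)$ are unitarily equivalent and the monotone labels coincide; a $\G^*$-periodic representative of $\phi_j$ is then obtained by the same intertwiner.

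The main obstacle is the treatment of eigenvalue crossings. Away from the real-analytic (hence measure-zero) crossing locus the monotone ordering agrees with a branch of analytic continuation, but at crossings the branches must be re-sorted, which is why only piecewise (and not globally) analytic behavior survives in the monotone labeling. Locally, however, the holomorphic projector technique still produces analytic eigendata through crossings when the relevant cluster is isolated from the rest of the spectrum, so analyticity is lost only at the re-sorting seams. Stitching these local analytic pieces into a globally piecewise analytic description on $\B$ is routine but technical. It is precisely the obstruction to splicing these local choices into a single analytic choice of bases over the entire torus $\TT^*$ that gives rise to the possibly non-trivial Bloch bundle $\Lambda_S$ central to the remainder of the paper.
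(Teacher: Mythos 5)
The paper itself offers no proof of this theorem: it is quoted verbatim as a standard result with the attribution \cite{Kuch_book,Wilcox}, so there is nothing internal to compare your argument against. Your sketch is, in substance, the standard proof found in those references (and in Reed--Simon, Vol.~4, XIII.16): direct-integral decomposition via the Bloch--Floquet transform, compact resolvent of the fiber operators $L(k)$ on the torus, analytic (type~(A)) dependence on $k$, continuity of the monotonically labeled eigenvalues via the min-max principle, unitary equivalence $L(k+\kappa)=e^{-i\kappa\cdot x}L(k)e^{i\kappa\cdot x}$ for $\G^*$-periodicity, and analyticity off the crossing variety giving the piecewise-analytic band functions and eigenfunctions. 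One caveat: your appeal to ``Rellich's theorem'' and to an ``analytic unitary that diagonalizes $L(k)$ on each cluster'' overstates what multi-parameter perturbation theory delivers when $n\geq 2$. Rellich's theorem on analytic continuation of eigenvalue branches through crossings is a one-parameter result; for several parameters Kato's transformation function only reduces an isolated cluster to an analytic finite-dimensional matrix family, whose individual eigenvalues and eigenvectors need \emph{not} be analytic at crossings. What actually rescues the claim --- and what Wilcox proves --- is that the set where multiplicities jump is a proper real-analytic subvariety of $\B$, hence of measure zero, off which the simple-eigenvalue calculus applies; your own sentence about the ``measure-zero crossing locus'' is the correct mechanism, so the proof stands once the misleading clause about analytic continuation through crossings is dropped.
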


\subsection{Bloch and Wannier functions}\label{SS:wannier}

There are two bases of functions (distributions) extremely useful for studying constant coefficients linear PDEs in $\R^n$. These are the basis of delta functions (well localized in the space) and the basis of plane waves (well localized in the dual space). They are dual to each other under the Fourier transform. One would want to have similar bases in the periodic case. Analogs of plane waves is clear - the Bloch functions. One suspects that an analog of delta functions could be obtained from these by the Bloch-Floquet transform. This indeed works (to some degree) and leads to the so called Wannier functions.

Let $\phi(k,x)$ be a Bloch function with the quasi-momentum $k$, depending ``sufficiently nicely'' on $k$. E.g., in our applications we can always assume that $\phi\in L^2(\TT^*,L^2(\W))$.

\begin{definition}
The {\em Wannier function} $w(x)$ corresponding to the Bloch function  $\phi(k,x)$ is
\begin{equation}\label{E:dwannier}
    w(x)=\int\limits_{\T^*} \phi(k,x) \dn k,\quad x\in\RR^n.
\end{equation}
\end{definition}

Comparing this definition with (\ref{E:Gelf_inversion}), one sees that the Wannier function $w(x)$ is just the inverse Bloch-Floquet transform of $\phi(k,x)$, and vice-versa, $\phi(k,x)$ is the Bloch-Floquet transform of $w(x)$.


It is often useful to have a Wannier function $w(x)$ normalized and having mutually orthogonal lattice shifts $w(x-\g)$. The following auxiliary result \cite{Kuch_wan} is useful:

\begin{lemma}\label{C:wannier_orthog}

\indent
\begin{enumerate}
 \item
The Wannier function $w(x)$ belongs to $L^2(\RR^n)$ and
\begin{equation}\label{E:wannier_l2}
    \int\limits_{\RR^n}|w(x)|^2dx=\int\limits_{\B} \|\phi(k,\cdot)\|^2_{L^2(\W)} \dn k.
\end{equation}
\item Functions $w_{\g}(x):=w(x-\g)$ are mutually orthogonal for $\g\in\G$ iff the function $\phi(k,x)$ in (\ref{E:dwannier}) has a $k$-independent norm in $L^2(\W)$.
\end{enumerate}
\end{lemma}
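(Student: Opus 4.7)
The plan is to derive both claims directly from the Plancherel identity and the basic covariance of the Bloch--Floquet transform under lattice shifts, since $w$ is by construction the inverse Bloch--Floquet transform of $\phi$.

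For part (1), I would simply invoke the Plancherel statement in Theorem~\ref{T:Planch}(1): since $\phi \in L^2(\TT^*, L^2(\W))$ by hypothesis and the transform $f \mapsto \hat f$ is unitary between $L^2(\RR^n)$ and $L^2(\TT^*, L^2(\W))$, its inverse applied to $\phi$ produces a function $w$ in $L^2(\RR^n)$ whose squared norm equals $\int_{\B} \|\phi(k,\cdot)\|^2_{L^2(\W)} \, d'k$. This gives (\ref{E:wannier_l2}) with essentially no work.

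For part (2), the key computation is to identify the Bloch--Floquet transform of the shifted Wannier function $w_\g(x) = w(x-\g)$. Substituting into (\ref{E:Floquet_tr}) and reindexing the summation variable $\g' = \g - \g_0$ (when shifting by $\g_0$) yields
\begin{equation*}
\widehat{w_{\g_0}}(k,x) = \sum_{\g \in \G} w(x+\g-\g_0)\, e^{-ik\cdot\g} = e^{-ik\cdot\g_0} \hat w(k,x) = e^{-ik\cdot\g_0}\, \phi(k,x).
\end{equation*}
Applying the Plancherel equality (\ref{E:Planch}) to the pair $w_0 = w$ and $w_\g$ then gives
\begin{equation*}
\int_{\RR^n} w(x)\, \overline{w(x-\g)}\, dx = \int_{\TT^*} e^{ik\cdot\g}\, \|\phi(k,\cdot)\|^2_{L^2(\W)} \, d'k.
\end{equation*}
Thus the inner products $\langle w_0, w_\g\rangle$ are exactly the Fourier coefficients, indexed by $\g \in \G$, of the $\G^*$-periodic scalar function $k \mapsto \|\phi(k,\cdot)\|^2_{L^2(\W)}$ on the torus $\TT^*$.

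The conclusion then follows from uniqueness of Fourier series on $\TT^*$: all Fourier coefficients for $\g \neq 0$ vanish if and only if this function is (almost everywhere) constant in $k$, which is exactly the condition that $\|\phi(k,\cdot)\|_{L^2(\W)}$ is independent of $k$. No step here is a real obstacle; the only thing to be careful about is keeping the sign conventions in the Bloch--Floquet transform consistent when computing $\widehat{w_\g}$, since a wrong sign would merely permute $\g \leftrightarrow -\g$ in the Fourier coefficient formula and not affect the final equivalence.
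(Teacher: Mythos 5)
Your proof is correct. Note that the paper itself states this lemma without proof, quoting it from \cite{Kuch_wan}; your argument --- Plancherel/unitarity of the Bloch--Floquet transform for part (1), and for part (2) the identity $\widehat{w_{\g_0}}(k,x)=e^{-ik\cdot\g_0}\phi(k,x)$ combined with the observation that the inner products $\langle w, w_\g\rangle$ are the Fourier coefficients of $k\mapsto\|\phi(k,\cdot)\|^2_{L^2(\W)}$ on $\TT^*$ --- is exactly the standard route one would take, and the reduction of pairwise orthogonality to $\langle w, w_\g\rangle=0$ for $\g\neq 0$ via translation invariance is the only implicit step worth making explicit.
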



The most important property of Wannier functions, due to which one wants to use them for numerical calculations in many problems of physics is their decay. An exponential decay is often desired.

A direct consequence of Theorem \ref{T:Planch}) is that smoothness with respect to $k$ of the Bloch function $\phi_j(k,x)$ corresponds to decay of $w(x)$ \footnote{Smoothness also means matching the values and all derivatives across the boundaries of Brillouin zones.}:
\begin{lemma}\label{L:smooth_wannier}\cite[Section 2.2]{Kuch_book}
\indent
\begin{enumerate}
\item If $\sum\limits_{\g\in\G}\|w_j\|_{L^2(\W+\g)}<\infty$, then $\phi_j(k,\cdot )$ is a continuous $L^2(\W)$-valued function on $\T^*$.
\item Infinite differentiability of $\phi_j(k,\cdot )$ as a function on $\T^*$ with values in $L^2(\W)$ is equivalent to the decay of $\|w_j\|_{L^2(\W+\g)}$ faster than any power of $|\g|$ for $\g\in\G$.
\item Analyticity  of $\phi_j(k,\cdot )$ as a function on $\T^*$ with values in $L^2(\W)$ is equivalent to the exponential decay of $\|w_j\|_{L^2(\W+\g)}$.
\end{enumerate}
\end{lemma}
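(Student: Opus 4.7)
The plan is to recognize Lemma \ref{L:smooth_wannier} as an $L^2(\W)$-valued Fourier series statement on the torus $\T^*$. Indeed, since $\phi_j(k,x)=\sum_{\g\in\G}w_j(x+\g)e^{-ik\cdot\g}$, the Fourier coefficients of the $L^2(\W)$-valued function $k\mapsto \phi_j(k,\cdot)$ on $\T^*$ are precisely the restrictions $w_j(\cdot+\g)\big|_{\W}$, whose $L^2(\W)$-norms equal $\|w_j\|_{L^2(\W+\g)}$. From this viewpoint the three assertions are, respectively, the Weierstrass-$M$-test, the $C^\infty$--rapid-decay duality, and the Paley--Wiener correspondence, all in the $L^2(\W)$-valued Fourier setting on the torus $\T^*\cong \R^n/\G^*$.

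For part (1) I would observe that when $\sum_\g\|w_j\|_{L^2(\W+\g)}<\infty$, the series defining $\phi_j(k,\cdot)$ is a sum of continuous $L^2(\W)$-valued functions of $k$ whose norms are summable independently of $k$, so Weierstrass' criterion yields uniform convergence and hence continuity of the sum on $\T^*$.

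For part (2) I would argue in both directions. If $\phi_j(k,\cdot)$ is $C^\infty$ on $\T^*$ with values in $L^2(\W)$, then for every multi-index $\alpha$ one can differentiate the Fourier series formally: $\partial_k^\alpha\phi_j(k,\cdot)$ has Fourier coefficients $(-i\g)^\alpha w_j(\cdot+\g)|_\W$, and Parseval (as in Theorem \ref{T:Planch}(1)) gives $\sum_\g|\g|^{2|\alpha|}\|w_j\|_{L^2(\W+\g)}^2<\infty$ for every $\alpha$. Combined with the monotonicity-type bound $\sup_\g |\g|^{N}\|w_j\|_{L^2(\W+\g)}\le (\sum_{\g'}|\g'|^{2N+2n}\|w_j\|_{L^2(\W+\g')}^2)^{1/2}$ (by separating a factor $|\g|^{-n}$), this produces faster-than-polynomial decay of $\|w_j\|_{L^2(\W+\g)}$. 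Conversely, if $\|w_j\|_{L^2(\W+\g)}$ decays faster than any polynomial, then for every $\alpha$ the termwise differentiated series $\sum_\g(-i\g)^\alpha w_j(\cdot+\g)e^{-ik\cdot\g}$ converges absolutely in $L^2(\W)$, uniformly in $k$, showing that $\phi_j$ is $C^\infty$.

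Part (3) essentially reduces to Theorem \ref{T:Planch}(2), the Paley--Wiener theorem for the Bloch--Floquet transform. If $\|w_j\|_{L^2(\W+\g)}$ decays exponentially, then $w_j\in L^2_\alpha(\R^n)$ for some $\alpha>0$, and Theorem \ref{T:Planch}(2) says $\phi_j\in A(\Omega_\alpha,L^2(\W))$, so in particular $\phi_j$ is analytic in a complex neighborhood of $\T^*$. Conversely, real-analyticity of $\phi_j$ on the compact torus $\T^*$ with values in a Hilbert space implies its holomorphic extension to some $\Omega_\alpha$ (by covering $\T^*$ with finitely many polydiscs of convergence for the Taylor series), so $w_j\in L^2_\alpha$ and the exponential decay follows from the definition (\ref{E:L2_a}). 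The only subtlety worth flagging is the vector-valued Paley--Wiener / analytic-extension argument on the torus; once that is granted, each equivalence is a short corollary of the Bloch--Floquet--Fourier formalism already set up. No part of the argument is really an obstacle; the main bookkeeping is to keep straight that coefficient index $\g$ ranges over $\G$ while the Fourier variable $k$ ranges over $\T^*=\R^n/\G^*$.
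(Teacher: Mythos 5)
Your argument is correct. Note that the paper itself offers no proof of this lemma --- it is quoted as a known result with a citation to \cite[Section 2.2]{Kuch_book} --- so there is nothing in the text to diverge from; your treatment (identifying $w_j(\cdot+\g)\big|_{\W}$ as the $L^2(\W)$-valued Fourier coefficients of $k\mapsto\phi_j(k,\cdot)$ and then invoking the Weierstrass test, the derivative/decay duality via Parseval, and the Paley--Wiener statement of Theorem \ref{T:Planch}(2)) is exactly the standard route the cited reference takes. The one point you rightly flag, that real-analyticity of a Hilbert-space-valued function on the compact torus yields a holomorphic extension to some $\Omega_\alpha$, is genuine but routine, since the norm-convergent Taylor expansions extend holomorphically on polydiscs and finitely many of them cover $\T^*$.
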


\subsection{Wannier bases for spectral subspaces}

Suppose that a compact set $S\subset\R$ is the union of $m$ ($m\geq 1)$ bands of the spectrum of $L$ (with overlaps allowed) and is separated by spectral gaps at the bottom and the top from the rest of the spectrum. An example is the part of the spectrum from its bottom till the first gap. We call such subset $S$ a {\bf composite band} (versus just a \textbf{band} when $m=1$).

Surrounding $S$ by a closed contour $\Sigma\in\C$, separating $S$ from the rest of the spectrum, one can define in the space $L_2(\R^n)$ the Riesz projector \cite{ReedSimon} $P$ onto the spectral subspace $H_S$ that corresponds to $S$. Restricting one's attention only to this subspace, one wants to have a convenient basis for numerical computations. Orthonormal bases of exponentially decaying Wannier functions are known to be the excellent candidates \cite{Marzari,Marzari12,MarzSou,Panati}. So, let us see how and whether one can find such bases.

Applying the above spectral projector to the direct integral decomposition of the operator $L$ (keeping the same contour $\Sigma$ for all $k\in B$), one gets and analytic with respect to $k$ family $P(k)$ of $m$-dimensional spectral projectors for the operators $L(k)$. It thus produces the $m$-dimensional \textbf{Bloch bundle} $\Lambda_S$ over the torus $\T^*$
\begin{equation}\label{E:bundle}
\Lambda_S=\bigcup_{\T^*}\emph{Ran}(P(k)),
\end{equation}
which corresponds to the spectral subset $S$. Then elements of the space $H_S$ correspond under Bloch-Floquet transform to $L_2$-sections of this bundle.

The discussion above (see more details in \cite{Kuch_wan}) shows that in order to have an orthonormal basis  of exponentially decaying Wannier functions in $H_S$, one need to find an \emph{analytic with respect to} $k$ basis $\phi_j(k,x)|_{j=1}^m$ of Bloch functions (sections of $\Lambda_S$) and take its (inverse) Bloch-Floquet transform. One immediately realises that (even if we were looking for a basis of continuous sections rather than analytic ones) this requires the bundle to be trivial. As it is shown in \cite{Kuch_wan}, the so called Oka's principle (implemented by Grauert \cite{Grauert1,Grauert2,Grauert3}) assures that there is no additional obstructions to getting an analytic, rather than just a continuous basis. Thus, topological non-triviality of this bundle is the only possible obstacle to the existence of desirable Wannier bases. It was shown by D.~Thouless \cite{Thouless}, that such obstacle can indeed occur. (For topological insulators this is not an obstacle, but rather a blessing \cite{Bern}.)

There are cases when the triviality is known. E.g., G.~Nenciu \cite{Nenciu} showed in 1983 (see also \cite{Helffer}) that when $m=1$ (single band), presence of time reversal symmetry guarantees triviality of the bundle and thus the obstacle disappears. G.~Panati \cite{Panati} proved in 2007, that in dimensions $n\leq 3$, even if $m>1$, the time reversal symmetry still removes the obstruction. However, what can one do if the obstruction exists? In \cite{Kuch_wan} one of the authors showed that if one allows some overdeterminancy of Wannier functions (rather than basis property), the obstacle disappears. In order to describe the result, we need to introduce some (rather popular nowadays) notions concerning frames.

\subsection{Parseval frames}\label{SS:parseval}

We recall here the notion of the so called {\bf tight (or Parseval) frame} of vectors (e.g., \cite{Larson_frame}) that replaces orthonormality in the overdetermined case.

\bd\label{D:frame}
\indent
\begin{itemize}
  \item A set of vectors $v_j,j=1,2,\dots$ in a Hilbert space $H$ is said to be a {\bf frame}, if for some constants $A,B>0$ and any vector $u\in H$ the following inequality holds:
      $$
      A\sum\limits_j |(u,v_j)|^2\leq \|u\|^2\leq B\sum\limits_j |(u,v_j)|^2.
      $$
  \item This set is said to be a {\bf tight (or Parseval) frame} if $A,B=1$, i.e. for any vector $u\in H$ the equality holds
  $$
  \sum\limits_j |(u,v_j)|^2 = \|u\|^2.
  $$
\end{itemize}
\ed

The name ``Parseval frame'' is justified by the fact that the formulas for expansion and synthesis for Parseval frames are exactly the same as for orthonormal bases. This follows from the simple observation \cite{Larson_frame} that tight frames are exactly the orthogonal projections of orthonormal bases from larger Hilbert spaces. At the same time, the overdetermined nature of the collection of vectors provides improved stability with respect to errors.

\subsection{Wannier Parseval frames}

The following result was proven in \cite{Kuch_wan}:
\begin{theorem}\label{T:kucwan} Let $L$ be a self-adjoint elliptic $\G$-periodic operator in $\RR^n, n\geq 1$ and $S\subset\RR$ be the union of $m$ spectral bands of $L$. Suppose that $S$ is separated from the rest of the spectrum. Then there exists a finite number $l$ of exponentially decaying composite Wannier functions
$w_j(x)$ such that their shifts $w_{j,\g}:=w_j(x-\g),\g\in\G$ form a tight (Parseval) frame in the spectral subspace $H_S$ of the operator $L$. This means that for any $f(x)\in H_S$, the equality holds
\begin{equation}
    \int\limits_{\RR^n}|f(x)|^2dx=\sum\limits_{j,\g}|\int\limits_{\RR^n}f(x)\overline{w_{j,\g}(x)}dx|^2.
\end{equation}
Here the number $l\in [m,2^n m]$ is equal to the smallest dimension of a trivial bundle containing an equivalent copy of $\Lambda_S$. In particular, $l=m$ if and only if $\Lambda_S$ is trivial, in which case an orthonormal basis of exponentially decaying composite Wannier functions exists.
\end{theorem}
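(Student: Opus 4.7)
The plan is to reduce the Wannier-frame question to a bundle-theoretic one via the Bloch-Floquet transform, and then resolve the bundle problem by combining Naimark's dilation theorem with the Oka-Grauert principle.

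First I would combine Theorem~\ref{T:Planch} and Lemma~\ref{L:smooth_wannier} to identify $H_S$ unitarily with the space of $L^2$-sections of the Bloch bundle $\Lambda_S$ over $\TT^*$, and to translate exponential decay of Wannier functions into analyticity of their Bloch-Floquet transforms in a tube neighborhood $\Omega_\alpha$ of $\TT^*$. Under this identification, any candidate family $w_1,\dots,w_l$ of exponentially decaying Wannier functions in $H_S$ corresponds to analytic sections $\phi_1,\dots,\phi_l$ of $\Lambda_S$ over $\Omega_\alpha$, and each $\G$-shift $w_{j,\g}$ corresponds to $e^{-ik\cdot\g}\phi_j(k,\cdot)$.

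Next I would translate the Parseval frame property into a pointwise condition on the fibers. For $f\in H_S$ with Bloch-Floquet transform $\fh(k,\cdot)$, the inner product $(f,w_{j,\g})$ is the $\g$-th Fourier coefficient on $\TT^*$ of the scalar function $k\mapsto (\fh(k,\cdot),\phi_j(k,\cdot))_{L^2(\W)}$. Summing over $\g$ via Parseval for Fourier series on the torus, and then over $j$, yields
\begin{equation*}
\sum_{j,\g}|(f,w_{j,\g})|^2=\int_{\TT^*}\sum_{j=1}^l|(\fh(k,\cdot),\phi_j(k,\cdot))|^2\dn k.
\end{equation*}
Comparing with $\|f\|^2=\int_{\TT^*}\|\fh(k,\cdot)\|^2\dn k$ shows that the shifts form a Parseval frame in $H_S$ if and only if $\{\phi_j(k,\cdot)\}_{j=1}^l$ is a Parseval frame in the fiber $\Lambda_S|_k$ for almost every $k$.

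The third step is Naimark's dilation theorem applied fiberwise: an $l$-tuple in an $m$-dimensional Hilbert space $V$ is a Parseval frame exactly when there is an isometric embedding $V\hookrightarrow\CC^l$ whose adjoint sends the standard basis of $\CC^l$ onto the frame. Globalizing, the existence of analytic sections $\phi_j$ with the fiberwise Parseval property is equivalent to the existence of an analytic subbundle embedding $\Lambda_S\hookrightarrow\Omega_\alpha\times\CC^l$; given such an embedding with fiberwise orthogonal projection $P(k)$, the sections $\phi_j(k):=P(k)e_j$ are analytic and produce the required Parseval frames, while conversely any such frame of analytic sections recovers the embedding as a bundle map.

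Finally, the problem becomes: what is the smallest $l$ for which $\Lambda_S$ embeds as an analytic subbundle of a trivial rank $l$ bundle over $\Omega_\alpha$? Since $\Omega_\alpha$ is Stein, the Oka-Grauert principle \cite{Grauert1,Grauert2,Grauert3} identifies this with the purely topological embedding question over $\TT^*$, so the optimal $l$ is the smallest dimension of a trivial bundle containing an equivalent copy of $\Lambda_S$, and $l=m$ iff $\Lambda_S$ itself is trivial (in which case an orthonormal basis is recovered). The bound $l\leq 2^n m$ follows from the CW structure of $\TT^n$: any complex rank $m$ bundle over an $n$-dimensional CW complex has a complementary bundle whose direct sum with it has rank at most $2^n m$. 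The main conceptual obstacle is the Naimark step, which is what converts the overdetermined frame condition into a bundle-embedding statement; the main analytic input is Oka-Grauert, which ensures the topological embedding can be realized analytically and thus yields honestly exponentially decaying Wannier functions.
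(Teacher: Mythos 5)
Your proposal follows essentially the same route as the paper, which does not reprove this theorem but cites \cite{Kuch_wan} for it and displays the same machinery in its sketch of Theorem~\ref{T:main}: Bloch--Floquet reduction of the Parseval condition to a fiberwise one, the Naimark observation (stated in Section~\ref{SS:parseval}) that tight frames are orthogonal projections of orthonormal bases, and the Grauert/Oka principle over the Stein neighborhood $\Omega_\alpha$ to upgrade topological to analytic triviality, with the $2^nm$ bound coming from the $2^n$-cell structure of the torus. The one step you should make explicit is in your Naimark globalization: the orthogonal projection onto an arbitrary analytically embedded copy of $\Lambda_S$ in $\Omega_\alpha\times\CC^l$ need not be analytic (adjoints are anti-holomorphic away from the real torus), which is why \cite{Kuch_wan} instead realizes the trivial rank-$l$ bundle as $\Lambda_S\oplus L'$ with $L'$ inside the complementary bundle $\Sigma_S$, uses the analytically continued Riesz projector $P(k)$ (orthogonal for real $k$) to project a holomorphic trivializing frame that has been orthonormalized on $\TT^*$, and thereby obtains analytic sections that are genuinely Parseval fiberwise.
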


Our goal here is to prove the result announced in \cite[section 6.5]{KuchBAMS}, which in physical dimensions drastically reduces the number of extra Wannier functions needed.

\section[3D]{In dimension three, only one extra Wannier function is needed}\label{S:3D}

As the techniques and results of \cite{Kuch_wan} surveyed in the section \ref{S:prelim} show,
the main issue is finding the smallest rank of a trivial bundle containing a copy of the Bloch bundle
in question. Moreover, due to the Grauert theorem \cite{Grauert1,Grauert2,Grauert3} (an instance of the Oka's principle), only topological triviality is needed\footnote{For basic notions involved in this and the next sections, one can refer to \cite{Milnor,Steenrod,Husemoller}.}.

When the dimension of the base does not exceed $3$, the answer is given by the following,
probably known, theorem, a proof of which we provide for completeness. All complexes below are assumed to be compact.

\begin{theorem}\label{T:3Dtriv}
Any complex vector bundle of rank $m$ over any $3$-dimensional CW complex $X$ embeds into a
trivial bundle of rank $m+1$.
\end{theorem}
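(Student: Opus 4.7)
The plan is to realize the embedding by splitting off a line bundle: over a paracompact base every subbundle admits an orthogonal (Hermitian) complement, so $E$ embeds into a trivial rank $(m+1)$ bundle if and only if there exists a complex line bundle $L$ over $X$ with $E\oplus L$ trivial of rank $m+1$. My goal is therefore to produce such an $L$.

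The first step is to write down the candidate $L$. Complex line bundles over a CW complex are classified by the first Chern class, giving an isomorphism $\mathrm{Pic}(X)\cong H^2(X;\Z)$, so I would pick $L$ with $c_1(L)=-c_1(E)$. Then by the Whitney product formula $c_1(E\oplus L)=c_1(E)+c_1(L)=0$, which is the only numerical obstruction available in this dimension.

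The key step is to verify that a rank $(m+1)$ complex vector bundle $F$ over a three-dimensional CW complex with $c_1(F)=0$ is automatically trivial. I would do this by obstruction theory applied to the associated principal $U(m+1)$-frame bundle: the obstruction to extending a trivialization from the $i$-skeleton to the $(i+1)$-skeleton lies in $H^{i+1}(X;\pi_i(U(m+1)))$. Since $\pi_0(U(m+1))=0$, $\pi_1(U(m+1))=\Z$ with the primary obstruction being $c_1(F)$, and $\pi_2(U(m+1))=0$ (the standard vanishing for compact Lie groups), the only potentially nontrivial obstruction lives in $H^2(X;\Z)$ and equals $c_1(F)=0$. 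For $i\geq 3$ the cohomology group $H^{i+1}(X;-)$ vanishes because $\dim X=3$. Hence $F$ is trivial, and the composition $E\hookrightarrow E\oplus L\cong\C^{m+1}$ is the required embedding.

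The main subtlety, as opposed to a real obstacle, is conceptual: one needs the combination $\dim X=3$ together with $\pi_2(U(n))=0$ to make all higher obstructions vanish for free. The very next homotopy group $\pi_3(U(n))=\Z$ would generate an obstruction in $H^4(X;\Z)$, which is killed here only by the dimension hypothesis. This accounts for the bound $m+1$ in dimension three and explains why in dimension four (Section \ref{S:d4}) one generically needs $m+2$.
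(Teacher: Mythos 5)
Your proposal is correct, but it gets to the conclusion by a different route than the paper. Both arguments reduce the problem to finding a line bundle $L$ with $E\oplus L$ trivial, and both ultimately take $L$ with $c_1(L)=-c_1(E)$; the difference is in how triviality of the rank-$(m+1)$ sum is established. You prove directly, by obstruction theory on the principal $U(m+1)$-bundle (using $\pi_1(U)=\Z$ detected by $c_1$, $\pi_2(U)=0$, and $\dim X=3$ to kill everything above), that a rank-$(m+1)$ bundle over a $3$-complex is trivial iff $c_1=0$. The paper instead first proves a splitting lemma: any bundle of rank $>1$ over a $3$-complex has a nowhere-zero section (obstruction theory applied to the sphere bundle, using $\pi_k(S^{2m-1})=0$ for $k\le 2$), so by induction $E\cong L\oplus\C^{m-1}$ for a line bundle $L$ with $c_1(L)=c_1(E)$; then $L^*\oplus E\cong L^*\oplus L\oplus\C^{m-1}$ has vanishing first Chern class and is trivial by the line-bundle classification (applied again via the splitting). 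The paper's route yields the extra information recorded in Remark \ref{R:4D}(1), namely that $E$ itself contains a trivial rank-$(m-1)$ subbundle; your route has the advantage of making transparent where the classification in Theorem \ref{T:4Complex} comes from and why $\pi_3(U)=\Z$ forces $m+2$ in dimension four. One small point of care in your version: after the primary obstruction $c_1(F)$ vanishes one must in principle worry that the higher obstructions are only defined up to the choice of section on lower skeleta, but since the relevant coefficient and cohomology groups vanish identically here, every choice works and the argument is sound.
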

Let us prove first the following auxiliary statement:

\begin{lemma}\label{L:1d} Any complex vector bundle of rank larger than $1$ over a $3$-dimensional CW complex $X$ admits a
non-vanishing section.
\end{lemma}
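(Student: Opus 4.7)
My plan is to apply standard obstruction theory to the associated unit sphere bundle. Fix a Hermitian metric on the rank-$m$ complex vector bundle $E\to X$ (with $m\geq 2$); then a nowhere-vanishing section of $E$ is equivalent, after radial normalization, to a section of the associated sphere bundle $S(E)\to X$, whose typical fiber is $S^{2m-1}$.

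I would build such a section skeleton by skeleton. On the $0$-skeleton, any choice of unit vector in each fiber serves as a section. Inductively, given a section $s$ over $X^{(k)}$, the obstruction to extending $s$ across a $(k+1)$-cell $e$ is represented by the homotopy class, in $\pi_k(S^{2m-1})$, of the attaching-sphere map $\partial e \to S^{2m-1}$ obtained by restricting $s$ to $\partial e$ via a local trivialization of $S(E)$. These local obstructions assemble into a cocycle in $C^{k+1}(X;\pi_k(S^{2m-1}))$ (with a priori local coefficients), whose cohomology class is the obstruction to extension over $X^{(k+1)}$.

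The crucial numerical point is that the hypothesis $m\geq 2$ forces $2m-1\geq 3$, so $S^{2m-1}$ is $(2m-2)$-connected and $\pi_k(S^{2m-1})=0$ for $0\leq k\leq 2$. Since $\dim X\leq 3$, the only obstruction groups $H^{k+1}(X;\pi_k(S^{2m-1}))$ that arise in the inductive extension correspond to $k\in\{0,1,2\}$, all of which vanish. Hence each obstruction is automatically zero, the extension succeeds at every stage, and we obtain a global section of $S(E)$, i.e., a nowhere-vanishing section of $E$.

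The only real concern is the use of local coefficient systems in obstruction theory, but since the coefficient groups $\pi_k(S^{2m-1})$ for $k\leq 2$ are all zero, the $\pi_1(X)$-action on the fiber is moot. I note in passing that the rank hypothesis is sharp: for a complex line bundle the fiber sphere is $S^1$ with $\pi_1(S^1)=\ZZ$, and the $H^2(X;\ZZ)$-obstruction that now appears is precisely the first Chern class, which can certainly be nonzero on a $2$- or $3$-complex.
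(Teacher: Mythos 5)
Your proof is correct and follows essentially the same route as the paper: extend a unit section cell by cell over the skeleta, noting that the obstruction over each $(k+1)$-cell lies in $\pi_k(S^{2m-1})$, which vanishes for $k\leq 2$ once $m\geq 2$. Your version merely packages the argument in the standard obstruction-cocycle language and adds the (true) remark that the rank hypothesis is sharp because of the first Chern class.
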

\begin{proof}
Let us introduce a Riemannian metric on the bundle. Clearly one can find a non-zero (in fact unit
length) section over the $0$-skeleton of $X$. Let us proceed by induction with respect to the dimension of the skeleton.
Assuming there is a section over the $k$-skeleton,
consider extending to the $(k + 1)$-skeleton. By induction hypothesis, we have a section over the
boundary of each (k + 1)-cell. Such may be viewed as a map from the boundary of the
cell to the unit sphere in the fiber, i.e. $S^k\mapsto S^{2m-1}$. However, for $m>1$, the fundamental group
and the second homotopy group of $S^{2m-1}$ are trivial, so the section can be extended.
\end{proof}
We can now return to proving Theorem \ref{T:3Dtriv}.

\begin{proof} Lemma \ref{L:1d} shows that any complex vector bundle $E$ of rank greater
than $1$ over a $3$-complex contains a trivial sub-bundle of rank $1$. I.e., $E$ is the direct
sum of a trivial bundle and its orthogonal complement. By induction, any
complex vector $m$-bundle over a $3$-complex is isomorphic to $L\bigoplus \C^{m-1}$,
where $L$ is a linear bundle.

Computing the first Chern class, we to obtain $c_1(L\bigoplus \C^{m-1})=c_1(L)$, and
 the 1st Chern class of
the line bundle $L$ is determined by the 1st Chern class of the bundle $E$.

Complex
line bundles are classified by the 1st Chern class, i.e. are isomorphic if and only if their 1rst
Chern classes coincide. Moreover, any second cohomology class is the 1rst Chern class of some
complex line bundle.

Now one checks that $L^*\bigoplus E=L^*\bigoplus L\bigoplus \C^{m-1}$, where $L^*$ is the dual bundle to $L$, has trivial 1st Chern class, and so is trivial.
\end{proof}

\begin{remark}\label{R:4D}
\indent
\begin{enumerate}
\item Under conditions of Theorem \ref{T:3Dtriv}, one sees from its proof that any $m$-bundle has a trivial $m-1$ subbundle.
\item One can show that in dimension 4, the number $m+1$ should be replaced by $m+2$.
\end{enumerate}
\end{remark}

We can now establish the main result:

\begin{theorem}\label{T:main}
Let $L$ be a self-adjoint elliptic $\G$-periodic operator in $\RR^n, n\leq 3$ and $S\subset\RR$ be the union of $m$ spectral bands of $L$, separated by gaps from the rest of the spectrum. Then there exist $m+1$  exponentially decaying composite Wannier functions
$w_j(x)$, such that their shifts $w_{j,\g}:=w_j(x-\g),\g\in\G$ form a tight (Parseval) frame in the spectral subspace $H_S$ of the operator $L$. This means that for any $f(x)\in H_S$, the equality holds
\begin{equation}
    \int\limits_{\RR^n}|f(x)|^2dx=\sum\limits_{j,\g}|\int\limits_{\RR^n}f(x)\overline{w_{j,\g}(x)}dx|^2.
\end{equation}
 A choice of $m$ such functions is possible if and only if $\Lambda_S$ is trivial, in which case an orthonormal basis of exponentially decaying composite Wannier functions exists.
\end{theorem}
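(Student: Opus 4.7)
The plan is to deduce the theorem by combining the topological result of Theorem \ref{T:3Dtriv} with the frame construction of Theorem \ref{T:kucwan}, bridged by the Oka--Grauert principle invoked earlier. Since $n\leq 3$, the dual torus $\TT^*=\RR^n/\G^*$ is a compact CW complex of dimension at most three, and the Bloch bundle $\Lambda_S$ is a rank-$m$ complex analytic vector bundle over it. Theorem \ref{T:kucwan} tells us exactly that the minimal $l$ for which a Parseval frame of $l$ exponentially decaying composite Wannier functions exists equals the smallest rank of a trivial analytic bundle containing an analytic copy of $\Lambda_S$. So it suffices to show that $\Lambda_S$ embeds analytically into the trivial rank-$(m+1)$ bundle over $\TT^*$.

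First, I would apply Theorem \ref{T:3Dtriv} directly to $\Lambda_S$, producing a \emph{topological} sub-bundle embedding $\Lambda_S\hookrightarrow \TT^*\times\CC^{m+1}$. Second, I would upgrade this to a holomorphic embedding. The bundle $\Lambda_S$ extends analytically to a Stein neighborhood $\Omega_\alpha$ of $\TT^*$ for some $\alpha>0$ (this is how $\Lambda_S$ is constructed via the Riesz projector applied to the analytic family $L(k)$), and by Grauert's theorems \cite{Grauert1,Grauert2,Grauert3} — the form of the Oka principle already used in \cite{Kuch_wan} — the holomorphic and topological classifications of complex vector bundles over Stein spaces coincide, and in particular sub-bundle embeddings that exist topologically can be realized holomorphically. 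This is exactly the step that converts the topological Theorem \ref{T:3Dtriv} into the analytic input required by Theorem \ref{T:kucwan}.

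Third, plugging the holomorphic embedding into Theorem \ref{T:kucwan} yields $m+1$ composite Wannier functions $w_1,\dots,w_{m+1}$ whose $\G$-shifts form a Parseval frame in $H_S$. Their exponential decay is guaranteed by part (3) of Lemma \ref{L:smooth_wannier}, since the embedding and the projection onto $\Lambda_S$ are analytic on $\Omega_\alpha$. The converse half of the statement — that $l=m$ is achievable if and only if $\Lambda_S$ is trivial, in which case one gets an orthonormal basis — is already contained verbatim in Theorem \ref{T:kucwan} and requires no additional argument.

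In this layout no genuinely new obstacle arises at the final assembly: the conceptual core has been concentrated into Theorem \ref{T:3Dtriv}, and what remains is bookkeeping. The only subtle point worth flagging is the Oka--Grauert step, where one must verify that the topological embedding of rank-$m$ into rank-$(m+1)$ bundles lifts to the Stein neighborhood $\Omega_\alpha$; but this is precisely the mechanism already used in \cite{Kuch_wan} for the coarser bound $l\leq 2^n m$, and it applies verbatim here with no modification.
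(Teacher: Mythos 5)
Your proposal is correct and follows essentially the same route as the paper: reduce the problem to finding the smallest rank of a trivial bundle containing a copy of $\Lambda_S$ (which is exactly what Theorem \ref{T:kucwan} isolates), settle that topological question with Theorem \ref{T:3Dtriv}, and pass from topological to holomorphic triviality over the Stein neighborhood $\Omega_\alpha$ via Grauert's theorems. The only difference is one of granularity: the paper's outline additionally spells out the internal mechanism of \cite{Kuch_wan} --- realizing the complementary line bundle inside $\Sigma_S$ using Kuiper's and Bungart's theorems and the result of \cite{ZK}, then projecting a holomorphic basis of $\Lambda_S\oplus L$ onto $\Lambda_S$ --- which you instead invoke as a black box through Theorem \ref{T:kucwan}.
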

We provide here just an outline of the proof, with missing details that can be found in \cite{Kuch_wan}
\begin{proof}
Consider the $m$-dimensional Bloch bundle $\Lambda_S$, which is analytic on a neighborhood $\Omega_\alpha$ (see (\ref{E:omega-a})) of the torus $\T^*\subset\C^d$, with $d\leq 3$. As it is noted in \cite{Kuch_wan}, this neighborhood is a Stein domain, contractible to the torus, and thus analytic and topological classifications of vector bundles here coincide, due to the Grauert Theorem \cite{Grauert1,Grauert2,Grauert3}. According to Theorem \ref{T:3Dtriv}, there is an analytic line bundle $L$ such that $\Lambda_S\bigoplus L$ is trivial.

Consider the infinite dimensional bundle
\begin{equation}
\Sigma_S:=\bigcup_{\Omega_a}\emph{Ran}\,(I-P(k)),
\end{equation}
complementary to $\Lambda_S$.

Due to Kuiper's theorem \cite{Kuiper}, $\Sigma_S$ is topologically, and due to Bungart's version \cite{Bungart} of the Grauert Theorem, analytically trivial as well.  As shown in \cite{ZK} (see the details in \cite{Kuch_wan}), an isomorphic copy of $L$ (which we denote with the same letter $L$) can be found in  $\Sigma_S$.
Then the $(m+1)$-subbundle $\Lambda_S\bigoplus L$ in the trivial bundle $\Omega_\alpha\times L_2(W)$ is trivializable. Thus, the construction of \cite{Kuch_wan} (picking a holomorphic basis in $\Lambda_S\bigoplus L$ and then projecting it into $\Lambda_S$) provides the required Parseval frame and thus finishes the proof.
\end{proof}

\section{Dimension four}\label{S:d4}

The four-dimensional situation is more complex, but less interesting, so we formulate the corresponding topological result here without the proof.

The answer to the question of trivialization of vector bundles over a $4$-complex is given in the following proposition.
\begin{theorem}\label{T:4Complex}
Let $E$ be a rank $m$ complex vector bundle $E$ over a $4$-complex $X$.
\begin{enumerate}
\item $E$ is trivial if and only if the total Chern class $c(E)$ is equal to $1$.
\item $E$ embeds in a trivial rank $m+1$ bundle if and only if $c_2(E) = c_1(E)^2$.
\item Any rank $m$ complex vector bundle over a $4$-complex embeds into a trivial rank $m+2$ bundle.
\end{enumerate}
\end{theorem}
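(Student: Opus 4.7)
My plan is to reduce parts (2) and (3) to part (1), and to prove part (1) by pushing the obstruction-theoretic argument behind Theorem~\ref{T:3Dtriv} one dimension further. The key new feature in dimension four is that the top-dimensional obstruction to finding a nowhere-vanishing section of a rank-$2$ bundle is precisely the second Chern class; otherwise the structure of the argument mirrors that of Theorem~\ref{T:3Dtriv}.

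For part (1), I would induct downward on the rank. For any rank $k\ge 3$, the obstructions to extending a nowhere-vanishing section across the $(j{+}1)$-skeleton live in $H^{j+1}(X;\pi_j(S^{2k-1}))$; since $2k-1\ge 5>j$ for $j\le 3\le\dim X-1$, these all vanish, so $E$ splits off a trivial line summand and I descend to the next rank. Iterating reduces the problem to a rank-$2$ bundle $F$ with $c(F)=1$. Here the only nontrivial obstruction lives in $H^4(X;\pi_3(S^3))\cong H^4(X;\Z)$ and is classically identified with $c_2(F)$ (see, e.g., \cite{Milnor,Steenrod,Husemoller}). Since $c_2(F)=0$, $F\cong\C\oplus L$ for a line bundle $L$ with $c_1(L)=c_1(F)=0$, which forces $L$ trivial because line bundles are classified by their first Chern class.

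For part (2), necessity is a direct computation: if $E\oplus L$ is trivial with $\operatorname{rk}L=1$, expanding $c(E)(1+c_1(L))=1$ through degree four forces $c_1(L)=-c_1(E)$ and $c_2(E)=c_1(E)^2$. For sufficiency I take the unique line bundle $L$ with $c_1(L)=-c_1(E)$ and note that $c(E\oplus L)=1$ in $H^{\le 4}(X;\Z)$ by the same computation, so $E\oplus L$ is trivial by part (1).

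For part (3) the strategy is identical but with a rank-$2$ complement $F$ in place of $L$: set $c_1(F)=-c_1(E)$ and $c_2(F)=c_1(E)^2-c_2(E)$, so that $c(E\oplus F)=1$ and part (1) finishes the argument. The one step I expect to require genuine topological input is the realization question: over a $4$-complex, does every prescribed pair $(\alpha,\beta)\in H^2(X;\Z)\times H^4(X;\Z)$ arise as $(c_1,c_2)$ of some rank-$2$ bundle? I would settle this via the low-dimensional Postnikov tower of $BU(2)$. From $U(2)\simeq S^1\times S^3$ one computes $\pi_2 BU(2)\cong\Z$ (detected by $c_1$), $\pi_3 BU(2)=0$, and $\pi_4 BU(2)\cong\Z$ (detected by $c_2$); moreover the potential $k$-invariant at stage four sits in $H^5(K(\Z,2);\Z)=H^5(\C P^\infty;\Z)=0$ and therefore vanishes. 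Hence the universal Chern-class map $BU(2)\to K(\Z,2)\times K(\Z,4)$ is a $5$-equivalence, and the induced map $[X,BU(2)]\to H^2(X;\Z)\times H^4(X;\Z)$ is a bijection for any complex $X$ of dimension at most $4$. This produces the required $F$ and completes the proof.
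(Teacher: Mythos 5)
The paper states Theorem \ref{T:4Complex} explicitly \emph{without} proof (``we formulate the corresponding topological result here without the proof''), so there is no in-paper argument to compare yours against; what can be said is that your proof is correct and is the natural continuation, one dimension up, of the paper's obstruction-theoretic proof of Theorem \ref{T:3Dtriv}. Your reduction of parts (2) and (3) to part (1) via the Whitney formula is exactly the expected route, and the two genuinely four-dimensional inputs are handled correctly: the identification of the sole obstruction to a nowhere-zero section of a rank-$2$ bundle over a $4$-complex with the Euler class of the underlying oriented real rank-$4$ bundle, i.e.\ with $c_2$; and the realization of an arbitrary pair $(\alpha,\beta)\in H^2(X;\Z)\times H^4(X;\Z)$ as $(c_1,c_2)$ of a rank-$2$ bundle. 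Your Postnikov argument for the latter is sound --- the relevant $k$-invariant lives in $H^5(K(\Z,2);\Z)=0$, and the map on $\pi_4$ is an isomorphism because the generator of $\pi_4(BU(2))\cong\pi_3(SU(2))$ is the rank-$2$ bundle over $S^4$ with $c_2=\pm 1$; a more pedestrian alternative, closer in spirit to Lemma \ref{L:1d}, is to start from $L_\alpha\oplus\C$ (realizing $c_1=\alpha$, $c_2=0$) and re-clutch over the $4$-cells by maps $S^3\to SU(2)$ to shift $c_2$ by any prescribed class. The only items worth making explicit in a final write-up are the edge case $m=1$ of part (1), where there is no rank-$2$ reduction and one quotes the classification of line bundles by $c_1$ directly, and the standard equivalence (via a Hermitian metric and orthogonal complements) between ``$E$ embeds in a trivial bundle of rank $m+r$'' and ``$E\oplus F$ is trivial for some rank-$r$ bundle $F$,'' which you use implicitly throughout.
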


Exactly as in the previous section, using Theorem \ref{T:4Complex} and following the scheme of the proof in \cite{Kuch_wan}, one obtains
\begin{theorem}\label{T:main4}
Let $L$ be a self-adjoint elliptic $\G$-periodic operator in $\RR^4$ and $S\subset\RR$ be the union of $m$ spectral bands of $L$, separated by gaps from the rest of the spectrum. Then there exist $m+2$  exponentially decaying composite Wannier functions
$w_j(x)$, such that their shifts $w_{j,\g}:=w_j(x-\g),\g\in\G$ form a tight (Parseval) frame in the spectral subspace $H_S$ of the operator $L$. This means that for any $f(x)\in H_S$, the equality holds
\begin{equation}
    \int\limits_{\RR^4}|f(x)|^2dx=\sum\limits_{j,\g}|\int\limits_{\RR^4}f(x)\overline{w_{j,\g}(x)}dx|^2.
\end{equation}
\end{theorem}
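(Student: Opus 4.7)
The plan is to follow verbatim the proof scheme of Theorem \ref{T:main}, replacing the role of Theorem \ref{T:3Dtriv} by part (3) of Theorem \ref{T:4Complex} and adjusting the rank of the auxiliary bundle from $1$ to $2$.

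First, I would form the $m$-dimensional analytic Bloch bundle $\Lambda_S$ over a complex neighborhood $\Omega_\alpha$ of the torus $\T^*\subset\C^4$. Since $\Omega_\alpha$ is a Stein domain that retracts onto $\T^*$, Grauert's theorem identifies the analytic and topological classifications of finite rank holomorphic vector bundles over it. Part (3) of Theorem \ref{T:4Complex}, applied to the topological restriction of $\Lambda_S$ to $\T^*$, produces a rank $2$ complex vector bundle $F$ such that $\Lambda_S\oplus F$ is topologically trivial of rank $m+2$; by Grauert once more, $F$ can be realized as a holomorphic bundle on $\Omega_\alpha$.

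Next I would implement the complementary-subbundle step exactly as in Theorem \ref{T:main}. Consider the infinite rank analytic bundle
\begin{equation}
\Sigma_S:=\bigcup_{\Omega_\alpha}\emph{Ran}\,(I-P(k)),
\end{equation}
the orthogonal complement to $\Lambda_S$ inside $\Omega_\alpha\times L^2(\W)$. By Kuiper's theorem the unitary group of a separable Hilbert space is contractible, so $\Sigma_S$ is topologically trivial; Bungart's analytic version of Grauert's theorem upgrades this to analytic triviality. Following the argument of \cite{ZK,Kuch_wan}, any finite rank analytic vector bundle can be embedded as an analytic subbundle of $\Sigma_S$, so in particular an isomorphic analytic copy of $F$ (still denoted $F$) sits inside $\Sigma_S$. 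The direct sum $\Lambda_S\oplus F$ is then an $(m+2)$-dimensional analytic subbundle of $\Omega_\alpha\times L^2(\W)$, trivializable by construction.

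To close, I would pick any global holomorphic frame $\{\widetilde{\phi}_j(k,\cdot)\}_{j=1}^{m+2}$ of this trivialized $(m+2)$-bundle that is orthonormal in each fiber, and project fiberwise $\phi_j(k,\cdot):=P(k)\widetilde{\phi}_j(k,\cdot)$ onto $\Lambda_S$. The $\phi_j$ are analytic and $\G^*$-periodic in $k$, and because the orthogonal projection of an orthonormal basis of an ambient space gives a Parseval frame on the image, $\{\phi_j(k,\cdot)\}_{j=1}^{m+2}$ is a Parseval frame in each fiber of $\Lambda_S$. Applying the inverse Bloch-Floquet transform to each $\phi_j$ and using Theorem \ref{T:Planch} together with Lemma \ref{L:smooth_wannier} produces $m+2$ exponentially decaying composite Wannier functions $w_j$ whose $\G$-shifts form a Parseval frame in $H_S$, which is the asserted identity. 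The only genuinely new ingredient beyond the three-dimensional case is Theorem \ref{T:4Complex}(3), which is the main obstacle and is precisely the purely topological input stated (without proof) in the previous theorem; once it is granted, the analytic-to-topological translation via Grauert, Bungart and Kuiper, and the final Parseval-frame extraction, proceed with no essential change from the argument of Theorem \ref{T:main}.
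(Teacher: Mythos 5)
Your proposal is correct and follows essentially the same route the paper indicates: the paper gives no written proof of this theorem, saying only that one argues ``exactly as in the previous section,'' using part (3) of Theorem \ref{T:4Complex} in place of Theorem \ref{T:3Dtriv} and otherwise following the scheme of Theorem \ref{T:main} and \cite{Kuch_wan}, which is precisely what you do (with the rank-$2$ complement, the Kuiper--Bungart trivialization of $\Sigma_S$, the embedding via \cite{ZK}, and the projection of a holomorphic frame). The only caveat is the minor one that a genuinely holomorphic frame cannot literally be orthonormal in every fiber of the complex neighborhood $\Omega_\alpha$; the standard fix (a holomorphic square root of the Gram matrix, orthonormality holding on the real torus $\T^*$) is part of the construction in \cite{Kuch_wan} to which both you and the paper defer.
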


\section{Generic choice of a ``missing'' function}\label{S:generic}

Now the question arises of how one can find these $m+1$ Wannier functions. The question is non-trivial even in the absence of the topological obstacle. There are recent strong works approaching this issue (see, e.g., \cite{Cances,Cornean,CornMonaco,Fior,Monaco_etal,Read}).

Without answering this question, we make a comment here that might turn out to be useful.

Let us assume that we know how to find the Wannier functions in absence of the topological obstruction (i.e., how to find a holomorphic basis of a trivial Bloch bundle $\Lambda_S$). Then, if the obstacle is present, the proofs of Theorems \ref{T:kucwan} and \ref{T:main} show that one finds a linear bundle $L$ such that $\Lambda_S\bigoplus L$ is trivial, embeds it into $\Sigma_S$, and uses a holomorphic basis of $\Lambda_S\bigoplus L$ to produce the desired frame of Wannier functions. The question is: assuming that such $L$ is found, how can one practically imbed it into $\Sigma_S$? The idea is that a transversality theorem (e.g., \cite{KalZaid}) implies that essentially ``almost any'' bundle morphism from $L$ into $\Sigma_S$ will do.

To simplify the situation, consider a linear analytic bundle $M$, such that the 2-bundle $N:=L\bigoplus M$ is trivial. Then, an imbedding of $N$ into $\Sigma_S$ would generate an imbedding of $L$.

It is easy to find a non-trivial morphism of $M$ into $\Sigma_S$. Indeed, consider for instance a two-dimensional linear subspace $Q$ in $Ran\,(I-P(k))$ for some $k=k_0$ and consider the trivialization map from $M$ to $\Omega_\alpha\times Q$, which is an isomorphism by construction. However,  $\Omega_\alpha\times Q$ does not sit inside of $\Sigma_S$, except of the fiber at the point $k_0$. Multiplying by the projector-function $(I-P(k))$, we fix this difficulty and obtain a non-trivial bundle morphism $\Phi: M\mapsto \Sigma_S$, which might fail to be imbedding, except in a neighborhood of the base point $k_0$.

\begin{theorem}\label{T:generic}
A generic perturbation of any morphism $\Phi:M\mapsto \Sigma_S$ produces an imbedding of $M$ into $\Sigma_S$.
\end{theorem}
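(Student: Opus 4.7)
The plan is to reduce the embedding condition to a non-vanishing condition and then invoke an infinite-dimensional transversality argument. Since $M$ has rank one, a morphism $\Phi: M \to \Sigma_S$ is injective on each fiber $M_k$ if and only if $\Phi(k) \neq 0$; hence $\Phi$ is an embedding precisely when it is nowhere-vanishing as a holomorphic section of the hom bundle $\mathcal{H} := \mathrm{Hom}(M, \Sigma_S) \cong M^* \otimes \Sigma_S$. This is an analytic Hilbert bundle over $\Omega_\alpha$ whose fibers are separable infinite-dimensional Hilbert spaces.

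First I would exploit the Grauert--Bungart triviality argument already used in the proof of Theorem \ref{T:main}: by Kuiper's theorem the unitary group of $\mathcal{H}$'s fiber is contractible, and by Bungart's analytic version of the Grauert Theorem, $\mathcal{H}$ is analytically trivial over $\Omega_\alpha$. Fix such a trivialization $\mathcal{H} \cong \Omega_\alpha \times H$ with $H$ a separable infinite-dimensional Hilbert space; morphisms $\Phi$ then correspond to holomorphic maps $\sigma: \Omega_\alpha \to H$, and the embedding condition reads $0 \notin \sigma(\Omega_\alpha)$.

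Next I would consider constant perturbations $\sigma_v := \sigma + v$ with $v \in H$. The perturbed section vanishes somewhere iff $-v \in \sigma(\Omega_\alpha)$. But $\sigma(\Omega_\alpha)$ is an analytic subset of $H$ of complex dimension at most $d = \dim_\C \Omega_\alpha$, which is finite, while $H$ is infinite-dimensional. Consequently the complement of $\sigma(\Omega_\alpha)$ is open and dense in $H$, so for arbitrarily small generic $v$ the perturbed morphism $\sigma + v$ is nowhere zero, producing the required embedding. This is the concrete realization of the general transversality principle of \cite{KalZaid}: the zero section $Z \subset \mathcal{H}$ has infinite codimension, and transversality of a finite-dimensional source to $Z$ forces disjointness, which holds for a residual set of perturbations.

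The main technical obstacle is placing the argument rigorously inside the analytic, rather than merely smooth, category: the usual Sard--Smale theorem requires a Fredholm setup, whereas here we need holomorphic perturbations of a section of a Hilbert bundle over a Stein neighborhood of the torus $\mathbb{T}^*$. The reference \cite{KalZaid} supplies precisely the analytic transversality machinery required; the remaining verification is that its hypotheses apply to $\mathcal{H}$, namely separability of the fibers, holomorphy of the projector-valued function $k \mapsto I - P(k)$ (which follows from the analyticity of the Riesz projector along $\Omega_\alpha$), and holomorphy of the trivialization furnished by Bungart. Once these are checked, the generic non-vanishing of $\sigma + v$ gives the result directly.
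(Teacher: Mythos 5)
Your reduction treats $M$ as a line bundle, so that ``injective on fibers'' becomes ``nowhere vanishing,'' and that is where your argument diverges from what the paper actually needs. In the construction preceding the theorem the morphism being perturbed is really defined on the rank-two trivial bundle $N=L\oplus M$ (the theorem's ``$M$'' is used loosely; note that the paper's sketch speaks of an ``$r\times 2$-matrix function''), because the whole point is to embed $N$, and hence its subbundle $L$, into $\Sigma_S$ --- embedding the line bundle $M$ alone accomplishes nothing for the application. For a rank-two source the fiberwise injectivity condition is that the two component sections be everywhere linearly independent, not merely nonzero, and your zero-avoidance argument does not cover that. Your density mechanism can be extended --- first make $\sigma_1+v_1$ nowhere zero as you do, then observe that the bad set of $v_2$ (the union over $k\in\Omega_\alpha$ of the affine lines through $-\sigma_2(k)$ in the direction of $\sigma_1(k)+v_1$) is still a $\sigma$-compact set of finite dimension, hence meager in the infinite-dimensional fiber --- but that second step has to be written out. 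A smaller slip: the image $\sigma(\Omega_\alpha)$ need not be an analytic subset of $H$ (images of holomorphic maps are in general neither closed nor analytic); what you actually need, and have, is that it is $\sigma$-compact and finite-dimensional, hence nowhere dense in $H$ by Baire's theorem.

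Methodologically your route is genuinely different from the paper's. The paper replaces $\Sigma_S$ by any holomorphic bundle $R$ of sufficiently large finite rank $r$, views the morphism as an $r\times 2$ matrix function on $\Omega_\alpha$, notes that the stratum of non-maximal-rank matrices has codimension $r-1>\dim_{\C}\Omega_\alpha$, and invokes the holomorphic transversality theorem of \cite{KalZaid} to push a generic perturbation off that stratum. You instead trivialize the relevant Hom bundle via Kuiper--Bungart (as in the proof of Theorem \ref{T:main}) and use the elementary fact that a finite-dimensional image cannot be dense in an infinite-dimensional Hilbert space. Once repaired as above, your argument is more self-contained (no appeal to \cite{KalZaid}), while the paper's version yields the stronger finite-rank statement, namely that the target may be any bundle of sufficiently large rank rather than the infinite-dimensional $\Sigma_S$.
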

In fact, the statement holds also if one replaces $\Sigma_S$ (which is infinite dimensional and trivial \cite{Kuch_wan,ZK}) by any holomorphic vector bundle $R$ of a sufficiently large rank $r$. Indeed, in this case one deals with a $r\times 2$-matrix function on a three-dimensional manifold. One needs to make sure that the rank of all matrices is maximal. For a sufficiently high $r$, the sub-variety of the matrices of smaller rank has co-dimension larger than the dimension of the set of parameters $\Omega_\alpha$. Then application of a transversality theorem \cite{KalZaid} finishes the proof.

So, a ``random'' perturbation of the non-trivial morphism described above should produce an imbedding and thus, as it is described above, a Parseval frame of exponentially decaying Wannier functions. 

\section{Conclusion and remarks}\label{S:remarks}
\begin{enumerate}
\item We have shown that in the spatial dimension up to three, for any isolated $m$-band $S$ of the spectrum of an elliptic self-adjoint operator periodic with respect to a lattice $\G$, there exists a family of $m+1$ exponentially decaying composite (generalized) Wannier functions such that their $\G$-shifts form a Parseval frame in the spectral subspace $H_S$ that corresponds to $S$. This result was announced in \cite{KuchBAMS}.

    In dimension $4$, one needs $m+2$ such functions.

\item Theorem \ref{T:Planch} and Lemma \ref{L:smooth_wannier} show that the best rate of exponential decay one can expect from Wannier functions is dictated by the value $\alpha$ in the domain $\Omega_\alpha$ of analyticity of the spectral projectors $P(k)$ (the larger $\alpha$, the faster is the decay). The proof of the main result of this paper shows that the Parseval frames allow for the same best rate of decay.

    One can improve on the decay rate by allowing anisotropic exponential decay, and thus using a multi-index $\alpha=(\alpha_1,\dots,\alpha_n)$ and corresponding domains $\Omega_\alpha$ (see Remark \ref{R:aniso}).

\item The proof of Theorem \ref{T:3Dtriv} shows (see Remark \ref{R:4D}) that when the spatial dimension does not exceed $3$, $m-1$ holomorphic linearly independent sections can be found in $\Lambda_S$, even if the bundle is non-trivial. So, it is just one family of Wannier functions that is missing and has to be replaced by two.

  \item The original restriction made on the operator is not essential. The results also apply in an abstract situation of a periodic elliptic self-adjoint operator on an abelian covering of a compact manifold, graph, or quantum graph, as long as the torsion free rank of the abelian deck group (rather than the spatial dimension) does not exceed 4. Neither formulations, nor proofs require any modifications in this case.

  \item The results of this article are of pure existence nature and do not provide any answer to the question of how one can actually construct the required number of the Wannier functions. However, a remark on generic nature of the required functions is made.
\end{enumerate}

\section*{Acknowledgments}
The second author expresses his gratitude to the NSF grants DMS-0406022 and DMS-1517938 for partial support of this work and to J.~Corbin, A.~Levitt, D. Monaco, G.~Panati, and M.~Zaidenberg for useful discussions.

\end{document}